\newtheorem{thm}{Theorem}[section]
\newtheorem{lem}[thm]{Lemma}
\newtheorem{cor}[thm]{Corollary}
\theoremstyle{definition}
\newtheorem{defn}[thm]{Definition}
\theoremstyle{plain}
\newtheorem{prop}[thm]{Proposition}
\newtheorem*{thm*}{Theorem \ref{thm:main}}
\newtheorem*{lemma*}{Lemma}
\newtheorem*{prop*}{Proposition}
\newtheorem*{cor*}{Corollary}
\newtheorem*{conj*}{Conjecture}
\theoremstyle{definition}
\newtheorem{ex}[thm]{Example}
\theoremstyle{remark}
\newcommand{\R}{\mathbb{R}}
\newcommand{\C}{\mathcal C}
\newcommand{\ind}{\mbox{$\perp \kern-5.5pt \perp$}}
\DeclareMathOperator*{\supp}{supp}
\tikzstyle{vertex} = [fill,shape=circle,node distance=80pt]
\tikzstyle{edge} = [fill,opacity=.5,fill opacity=.5,line cap=round, line join=round, line width=50pt]
\tikzstyle{elabel} =  [fill,shape=circle,node distance=30pt]
\title{On the identification of $k$-inductively pierced codes using toric ideals}
\author[1]{Molly Hoch}
\author[2]{Samuel Muthiah}
\author[3]{Nida Obatake}
\affil[1]{Wellesley College}
\affil[2]{Westmont College}
\affil[3]{Texas A\&M University}
\date{July 6, 2018}
\begin{document}
\maketitle

\begin{abstract}
Neural codes are binary codes in $\{0,1\}^n$; here we focus on the ones which represent the firing patterns of a type of neurons called place cells. There is much interest in determining which neural codes can be realized by a collection of convex sets. However, drawing representations of these convex sets, particularly as the number of neurons in a code increases, can be very difficult. Nevertheless, for a class of codes that are said to be $k$-inductively pierced for $k=0,1,2$ there is an algorithm for drawing Euler diagrams. Here we use the toric ideal of a code to show sufficient conditions for a code to be 1- or 2-inductively pierced, so that we may use the existing algorithm to draw realizations of such codes. \end{abstract}

\section{Introduction}

John O'Keefe's discovery of place cells earned him a share of the 2014 Nobel Prize for Physiology or Medicine \cite{OD71}. Place cells are a type of neuron found in certain mammals that help them to locate themselves spatially. Place cells fire only when the mammal is in a certain part of its environment. The receptive fields, or areas where the neurons fire, are approximately convex, and the firing activity of place cells can be represented by neural codes. Much study has been done on convex neural codes, specifically focusing on which codes have convex receptive fields corresponding to them. These codes are called convex realizable \cite{cruz2016open,what-makes,FH-REU17,FM-REU17,lienkaemper2017obstructions,mulas2017characterization,rosen2017convex}. 

However, even if a convex realization exists, it can be difficult to draw one. Relative to the study of convex realizable codes, far less work has been done on how to draw realizations of neural codes. However, \cite{GOY16} have shown that the algorithm created by \cite{Stapleton10} to draw Euler diagrams can be used to create 2-dimensional realizations of codes that are 0-, 1-, or 2-inductively pierced.  In \cite{GOY16}, the authors gave a necessary and sufficient condition for identifying 0-inductively pierced codes, and a necessary condition for 1-inductively pierced codes using degree bounds on generating sets of a binomial ideal called the neural toric ideal.  They conjectured that degree bounds on generators of toric ideals could be used to determine sufficient conditions for 1- and 2-inductively pierced codes.  In this paper, we explore sufficient conditions for 1- and 2-inductively pierced codes by examining elements of the neural toric ideal.  We also introduce new algebraic signatures for 1- and 2-inductively pierced codes.  Specifically, Proposition~\ref{1-inductively_pierced_sufficient_condition} gives a sufficient condition for 1-inductively pierced codes, and Theorem~\ref{cubic_implies_2-piercing} gives a sufficient condition for 2-inductively pierced codes. 

In Section 2, we discuss the definitions necessary for our main results, which we present in Section 3.  We conclude with a discussion of our findings in Section 4.

\section{Background}
We follow the definitions from \cite{GOY16}.

\begin{defn}
A \textit{neural code} on $n$ neurons is a set of binary strings $\mathcal{C} \subseteq \{0,1\}^n$.  An element $\sigma$ of $\C$ is a \textit{codeword}.
\end{defn}

We will assume that the all-zeros codeword is always in a neural code.  For simplicity of notation, we will write $[n]:=\{1,2,\ldots,n\}$.
\begin{defn}
A \textit{realization} of a code $\mathcal{C}$ is a collection of sets $\mathcal{U}=\{U_1,\ldots,U_n\}$ where $U_i\subseteq \mathbb{R}^d$ such that $\mathcal{C}=\mathcal{C}(\mathcal{U}):=\{\sigma \subseteq [n] \mid U_\sigma \setminus \bigcup_{j\in[n]\setminus{\sigma}}U_j \neq \emptyset \}$ where for $\sigma \subseteq [n]$, we define $U_\sigma:=\bigcap_{i\in\sigma}U_i$.  A $U_i\in \mathcal{U}$ is the \textit{place field} of the neuron $i$.  A \textit{zone} in $\mathcal{C}$ is a region $U_\sigma \setminus \bigcup_{j\in[n]\setminus{\sigma}}U_j$ in a realization of $\C$ and each zone corresponds to a codeword.  
\end{defn}

\begin{defn}
A code $\C$ is \textit{convex open} if it is realizable by $\mathcal{U}=\{U_1,\ldots,U_n\}$ where all the $U_i$ are convex open sets. 
\end{defn}

In this paper, we focus on neural codes that are convex open and realizable in dimension 2.  Furthermore, we will assume that the boundary curves of place fields exist and that our codes are well-formed.

\begin{defn}
A code $\C$ is \textit{well-formed} if there exists a realization of $\C$ in $\R^2$ such that \begin{itemize}
\item The boundary curves of place fields intersect generally (i.e., not tangentially nor identically), and at only a finite number of points.
\item
At any given point, at most two boundaries of place fields intersect.
\item
Each zone is connected.
\end{itemize}

\end{defn}

Figure~\ref{wellformedviolationsfig} illustrates examples of realizations of codes that are not well-formed.
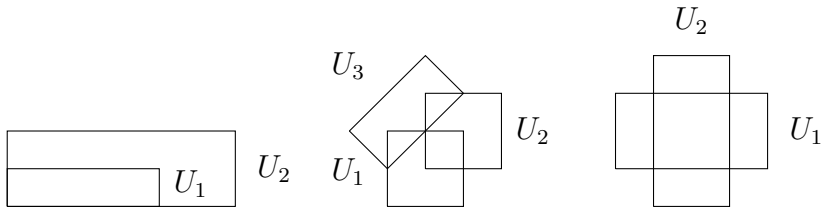
\begin{figure}[H]
\centering
\begin{tikzpicture}
\draw[-] (-1,0)--(1,0)--(1,.5)--(-1,.5)--(-1,0) {};
\draw[-] (-1,0)--(2,0)--(2,1)--(-1,1)--(-1,0) {};
\node[label= right:$U_1$] (U1) at (.9,.3) {};
\node[label=right:$U_2$] (U2) at (2,.5) {};

\draw[-] (4,0)--(5,0)--(5,1)--(4,1)--(4,0) {};
\draw[-] (4.5,.5)--(5.5,0.5)--(5.5,1.5)--(4.5,1.5)--(4.5,0.5) {};
\draw[-] (3.5,1)--(4,.5)--(5,1.5)--(4.5,2)--(3.5,1) {};
\node[label=above left:$U_1$] (U1) at (4,0) {};
\node[label=right:$U_2$] (U2) at (5.4,1) {};
\node[label=above left: $U_3$] (U2) at (4,1.4) {};

\draw[-] (7,.5)--(9,.5)--(9,1.5)--(7,1.5)--(7,.5) {};
\draw[-] (7.5,0)--(7.5,2)--(8.5,2)--(8.5,0)--(7.5,0) {};
\node[label=right:$U_1$] (U1) at (9,1) {};
\node[label=above: $U_2$] (U2) at (8,2) {};
\end{tikzpicture}
\caption{Realizations that violate the three parts of the definition of well-formed.}\label{wellformedviolationsfig}
\end{figure}

\begin{ex}
Consider the code $\C=\{000,100,001,101,011,111\}$.  This neural code is convex open and realizable in dimension 2.  A well-formed realization of $\C$ is shown in Figure~\ref{realization}.  The realization comprises 6 zones, one for each codeword, including the empty zone.
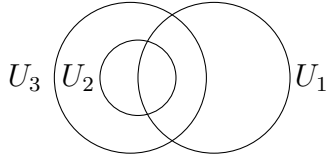
\begin{figure}[H]
\centering
\begin{tikzpicture}
\draw (-0.6,0) circle (1cm) {};
\draw (0.5,0) circle (1cm) {};
\draw (-0.5,0) circle (0.5cm) {};
\node[label=right :$U_3$] (3) at (-2.5,0) {};
\node[label=left : $U_1$] (1) at (2.3,0) {};
\node[label=right :$U_2$] (2) at (-1.8,0) {};
\end{tikzpicture}
\caption{A well-formed realization of the code $\mathcal{C}=\{000,100,001,101,011,111\}$.}\label{realization}
\end{figure}
\end{ex}


\begin{defn}
Let $\C=\C(\mathcal{U})$ be a well-formed neural code.  A place field $U_\ell$ is a \textit{$k$-piercing} of $U_1,\ldots, U_k$ \textit{identified by a zone} $z$ if the boundary of $U_\ell$ intersects the boundaries of $k$ other place fields $U_1,\ldots, U_k$ and the presence of $U_\ell$ adds exactly $2^k$ zones when $U_\ell$ is added to the realization $\mathcal{U}\setminus U_{\ell}$.  A $k$-piercing is identified by a zone in the diagram in which the place field $U_\ell$ is contained.  Note that the identifying zone $z$ is not necessarily unique.
\end{defn}

For example, in Figure~\ref{realization}, $U_2$ is a 1-piercing of $U_1$ identified by the zone $001$.  Notice that $U_1$ is not a 2-piercing; its boundary intersects the boundaries of two place fields, but adding it does not add $4$ new regions to the diagram.

\begin{defn}
Let $\C=\{\sigma_1,\ldots,\sigma_m\}$ be a code on $n$ neurons. We say $\C$ is {\em $k$-inductively pierced} if there exists a $k$-piercing $U_\lambda$ in $\C$ such that  $\C \setminus \{\lambda\} := \{\hat\sigma_1,\ldots ,\hat\sigma_m\}$ is $k$-inductively pierced where $\hat\sigma_i = \sigma_i$ except $\hat\sigma_i$ has a 0 in the $\lambda$th position. 

In this paper we will assume that if a code is $k$-inductively pierced, the code is not $(k+1)$-inductively pierced.
\end{defn}

Notice that the definition of a $k$-inductively pierced code relies on knowing a realization for $\C=\C(\mathcal{U})$.  Our main goal is to determine $k$-inductively pierced codes directly from the code, without knowledge of the arrangement of the $U_i\in \mathcal{U}$.  To understand $k$-inductively pierced codes directly from the code, the authors in \cite{GOY16} defined the toric ideal of a neural code.  

\begin{defn}
For a neural code $\C$, define the ring homomorphism 
\begin{align*}
\phi_\mathit{c}:\mathbb{F}_2[p_c\ |\ c\in\mathcal{C}]&\longrightarrow \mathbb{F}_2[x_i\ |\ i\in [n]] \\ 
p_c&\mapsto \prod_{i\in \supp(c)}x_i.
\end{align*}  Then, $I_\C:=\ker \phi_\C$ is the \textit{toric ideal} of $\C$. 
\end{defn} 

The toric ideal has the computational benefit of being relatively quick to compute using $\tt{Macaulay2}$ \cite{M2} with the $\tt{4ti2}$ package \cite{4ti2}, or with $\tt{SAGE}$ \cite{sage}.

In \cite{GOY16}, the authors investigated algebraic signatures for $0$-, $1$-, and $2$-inductively pierced codes by considering generators of $I_\C$.  They gave necessary and sufficient conditions on the toric ideals for $0$-inductively pierced codes, as well as a necessary condition on the toric ideal for $1$-inductively pierced codes.

\begin{prop}[\cite{GOY16}]\label{exisitingresults}
Let $\mathcal{C}$ be a well-formed neural code on $n$ neurons.
\begin{enumerate}
\item The neural code $\mathcal{C}$ is 0-inductively pierced if and only if $\mathit{I}_{\mathcal{C}}=\langle 0 \rangle.$ 
\item If the neural code $\C$ is 0- or 1- inductively pierced, then $\mathit{I}_{\mathcal{C}}=\langle 0 \rangle$ or is generated by quadratics.
\item If the neural code $\C$ contains a triple intersection of three place fields and the pairwise intersections are general, then $\mathit{I}_{\mathcal{C}}$ contains a binomial of degree 3 of particular form, in particular $p_{111w}p^2_{000v}-p_{100v}p_{010v}p_{001w}$ or $p_{111w}-p_{100\ldots 0}p_{010\ldots 0}p_{001w}$ where $v,w\in \{0,1\}^{n-3}$ correspond to zones in $\mathcal{C}(\mathcal{U})$.
\end{enumerate}
\end{prop}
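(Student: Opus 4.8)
\emph{Plan.}
All three parts rest on the standard toric dictionary. For a codeword $c$ write $v_c := \sum_{i\in\supp(c)} e_i \in \zz^n$ for its support vector, so that $\phi_{\mathcal{C}}(p_c) = x^{v_c}$; the all-zeros codeword contributes $v_{0\cdots0} = 0$ and we regard $p_{0\cdots0}$ as the constant $1$. A pure-difference binomial $\prod_c p_c^{a_c} - \prod_c p_c^{b_c}$ lies in $I_{\mathcal{C}}$ exactly when $\sum_c a_c v_c = \sum_c b_c v_c$, and $I_{\mathcal{C}}$ is $\mathbb{F}_2$-spanned by such binomials. In particular $I_{\mathcal{C}} = \langle 0\rangle$ if and only if the vectors $\{v_c : c\in\mathcal{C}\}$ are $\qq$-linearly independent, and in general the generators of $I_{\mathcal{C}}$ are controlled by the lattice of $\zz$-relations among the $v_c$. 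The plan is to translate the local geometry of each kind of piercing into such a relation.

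For the forward implication of part~(1), induct on the number of place fields. By definition $\mathcal{C}$ has a $0$-piercing $U_\lambda$ with $\mathcal{C}\setminus\{\lambda\}$ again $0$-inductively pierced; since $U_\lambda$ is absent from $\mathcal{C}\setminus\{\lambda\}$, the neuron $\lambda$ appears in no codeword of $\mathcal{C}\setminus\{\lambda\}$, while in $\mathcal{C}$ the only codeword having $\lambda$ in its support is $\sigma\cup\{\lambda\}$, the codeword of the single new zone lying inside $U_\lambda$. Thus $v_{\sigma\cup\{\lambda\}} = v_\sigma + e_\lambda$ is the unique support vector of $\mathcal{C}$ with a nonzero $\lambda$-coordinate, so it is independent of the vectors of $\mathcal{C}\setminus\{\lambda\}$, which are independent by induction; hence $\{v_c : c\in\mathcal{C}\}$ is independent and $I_{\mathcal{C}} = \langle 0\rangle$. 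For the reverse implication, suppose $I_{\mathcal{C}} = \langle 0\rangle$ and fix a well-formed realization $\mathcal{U}$. If two boundary curves $\partial U_i$, $\partial U_j$ crossed, the crossing would be transversal and --- since at most two boundaries meet at a point --- membership in every other place field would be locally constant near it, so $\mathcal{C}$ would contain $\tau$, $\tau\cup\{i\}$, $\tau\cup\{j\}$, $\tau\cup\{i,j\}$ for the local background pattern $\tau$, forcing $v_\tau + v_{\tau\cup\{i,j\}} = v_{\tau\cup\{i\}} + v_{\tau\cup\{j\}}$ and so $I_{\mathcal{C}}\neq\langle 0\rangle$. Hence no two boundary curves cross, $\mathcal{U}$ is laminar, and a laminar family is $0$-inductively pierced (repeatedly delete an innermost place field, each deletion a $0$-piercing).

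Part~(2) is part~(1) when $\mathcal{C}$ is $0$-inductively pierced. When $\mathcal{C}$ is $1$-inductively pierced, again induct on its construction. A $0$-piercing step does not change $I_{\mathcal{C}}$: the single new codeword carries a fresh variable occurring in no binomial relation. A $1$-piercing of $U_m$ identified by a zone with codeword $\sigma$ (say $m\notin\sigma$) adjoins exactly the two codewords $\sigma\cup\{\lambda\}$ and $\sigma\cup\{m,\lambda\}$ with $\lambda$ a fresh neuron, and these together with the preexisting $\sigma$ and $\sigma\cup\{m\}$ satisfy $v_{\sigma\cup\{\lambda\}} + v_{\sigma\cup\{m\}} = v_{\sigma\cup\{m,\lambda\}} + v_\sigma$; that is, the quadratic binomial $p_{\sigma\cup\{\lambda\}}\,p_{\sigma\cup\{m\}} - p_{\sigma\cup\{m,\lambda\}}\,p_\sigma$ lies in $I_{\mathcal{C}}$. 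What remains is to show that after such a step $I_{\mathcal{C}}$ is still generated in degree $\leq 2$ (see below). For part~(3), fix $U_1, U_2, U_3$ with $U_1\cap U_2\cap U_3 \neq \emptyset$ and all three pairs of boundary curves meeting generally, and carry out a case analysis --- using well-formedness --- of how the three curves sit near the triple overlap. One case forces $\mathcal{C}$ to contain codewords $000v$, $100v$, $010v$, $001w$, $111w$ for suitable background patterns $v,w$, and then $v_{111w} + 2v_{000v} = v_{100v} + v_{010v} + v_{001w}$ shows $p_{111w}p_{000v}^2 - p_{100v}p_{010v}p_{001w} \in I_{\mathcal{C}}$; the degenerate case, in which $U_1$ and $U_2$ overlap only inside $U_3$, instead forces $10\cdots0$, $01\cdots0$, $001w$, $111w \in \mathcal{C}$, and then $v_{111w} = v_{10\cdots0} + v_{01\cdots0} + v_{001w}$ shows $p_{111w} - p_{10\cdots0}\,p_{01\cdots0}\,p_{001w} \in I_{\mathcal{C}}$.

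The main obstacle is establishing, in part~(2), that $I_{\mathcal{C}}$ is generated by binomials of degree $\leq 2$. The quadratic binomial produced at a $1$-piercing step does \emph{not} by itself generate the newly arising relations: already on codes built from two successive $1$-piercings one sees extra quadratic binomials --- mixing codewords from different steps, with no fresh variable available to separate them --- that are forced as minimal generators. So the crux is a structural lemma: for every $0$/$1$-inductively pierced code, every binomial in $I_{\mathcal{C}}$ is an $\mathbb{F}_2[p_c]$-combination of quadratic binomials of $I_{\mathcal{C}}$. I would prove this by induction on the number of piercings, showing that every $\zz$-relation of the enlarged configuration reduces modulo quadratic binomials to one of the previous configuration, exploiting the fresh neuron created at each $1$-piercing and the laminar-plus-transverse-pairs shape of the $v_c$. (Part~(3) likewise needs care in the case analysis to ensure that no intervening boundary curve destroys one of the required codewords, which is again where well-formedness enters.)
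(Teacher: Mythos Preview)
The paper does not prove this proposition: it is quoted as a known result from \cite{GOY16} and used as background, with no argument supplied. So there is no in-paper proof to compare your proposal against.

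On its own merits, your outline is the natural one and matches what one expects the cited reference to do. The toric dictionary you set up (binomials in $I_{\mathcal C}$ $\leftrightarrow$ $\zz$-relations among the support vectors $v_c$) is exactly the right framework. Your argument for part~(1) --- forward by induction on the number of $0$-piercings, reverse by observing that a transversal boundary crossing in a well-formed realization forces four codewords $\tau,\ \tau\cup\{i\},\ \tau\cup\{j\},\ \tau\cup\{i,j\}$ and hence a nontrivial quadratic relation --- is correct and is the standard route. Your treatment of part~(3), reading off the displayed degree-$3$ relations from the codewords that a general triple intersection must create, is likewise the expected argument.

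You correctly locate the substantive content in part~(2): after each $1$-piercing step, why is $I_{\mathcal C}$ still generated in degree $\leq 2$? Your plan --- use the fresh neuron $\lambda$ to rewrite any relation involving the two new variables $p_{\sigma\cup\{\lambda\}}$ and $p_{\sigma\cup\{m,\lambda\}}$, via the quadratic $p_{\sigma\cup\{\lambda\}}p_{\sigma\cup\{m\}} - p_{\sigma\cup\{m,\lambda\}}p_\sigma$, down to a relation in the old code --- is the right idea, and you are honest that it is only a sketch. Turning it into a proof requires a careful elimination argument (since $\lambda$ appears in exactly two support vectors, any homogeneous relation projects onto a relation between those two, which the quadratic kills; one then inducts). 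As the present paper simply cites \cite{GOY16} for this, a detailed comparison would have to be made against that source rather than against anything here.
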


The authors showed that the converse of the part 2 of Proposition~\ref{exisitingresults} is false; in fact, there exists a counter-example on as few as three neurons.

\begin{ex}
Let A1=$\{000,100,010,001,110,101,011,111\}$.  A generating set for $I_{\text{A1}}$, is $\{p_{110}-p_{100}p_{010},\ p_{101}-p_{100}p_{001},\ p_{011}-p_{010}p_{001},\ p_{111}-p_{110}p_{001}\}$. The toric ideal $I_{\text{A1}}$ is generated by quadratics, but as seen in Figure~\ref{A1fig}, each of $U_1,U_2,U_3$ are 2-piercings of the other two place fields, and so A1 is not 1-inductively pierced.  In fact, A1 is 2-inductively pierced.

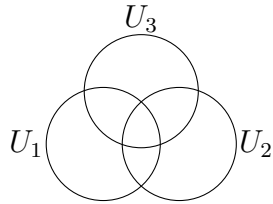
\begin{figure}[h!]
\begin{center}
\begin{tikzpicture}
\draw (-0.5,0) circle (0.75cm);
\draw (0.5,0) circle (0.75cm);
\draw (0,0.7) circle (0.75cm);

\node[label=left :  $U_1$] (1) at (-1,0) {};
\node[label=right :  $U_2$] (2) at (1,0) {};
\node[label=above :  $U_3$] (3) at (0,1.2) {};
\end{tikzpicture}
\caption{A realization of A1, a 2-inductively pierced code whose toric ideal is generated by quadratics.}\label{A1fig}
\end{center}
\end{figure}
\end{ex}

We note that part 3 of Proposition~\ref{exisitingresults} includes the case of codes with 2-piercings (Proposition~4.3.1 in~\cite{obatake2016}) and the case of $2$-inductively pierced codes, as summarized by Lemma~\ref{2piercingnecc}. 

\begin{lem}\label{2piercingnecc}
If a neural code $\C$ contains a 2-piercing or is 2-inductively pierced, then $I_\C$ contains a cubic of form $p_{111w}p^2_{000v}-p_{100v}p_{010v}p_{001w}$ or $p_{111w}-p_{100\ldots 0}p_{010\ldots 0}p_{001w}$ where $v=w$.
\end{lem}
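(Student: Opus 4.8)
The plan is to first reduce the $2$-inductively pierced case to the $2$-piercing case, then read off the required codewords from the local picture of a $2$-piercing, and finally verify the binomial lies in $I_\C$ by a monomial computation. By definition, if $\C$ is $2$-inductively pierced then there is a $2$-piercing $U_\lambda$ in $\C$, so $\C$ contains a $2$-piercing; hence it suffices to treat a well-formed code $\C = \C(\mathcal{U})$ containing a $2$-piercing $U_\ell$ of two place fields, which after reindexing we take to be $U_1, U_2$ with $\ell = 3$.

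The heart of the argument is to locate $U_3$ in a single ``context.'' Since $\partial U_3$ meets the boundaries of no place fields other than $U_1$ and $U_2$, for each $i \geq 4$ the sets $U_3$ and $U_i$ have disjoint boundaries, so (excluding the degenerate possibility $U_i \subsetneq U_3$, which is incompatible with $U_3$ adding only four zones) either $U_3 \subseteq U_i$ or $U_3 \cap U_i = \emptyset$. Thus $U_3$ is contained in a single cell of the sub-arrangement $\{U_4, \ldots, U_n\}$; let $w \in \{0,1\}^{n-3}$ record its membership pattern. Inside this cell, $U_1$ and $U_2$ cut out at most four subregions, with $U_1U_2$-patterns $00, 10, 01, 11$, each connected (hence a single zone of $\mathcal{U} \setminus \{U_3\}$) by well-formedness. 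Adding $U_3$ alters only zones of context $w$; each of the at most four subregions contributes a net change of $0$ or $1$ to the zone count, being either unchanged or split by $U_3$ into an ``outside'' part and an ``inside'' part. Since a $2$-piercing adds exactly $2^2 = 4$ zones, all four subregions must be nonempty and each must be split; the surviving pieces give codewords $000w, 100w, 010w, 110w \in \C$ and the new pieces give $001w, 101w, 011w, 111w \in \C$. In particular $111w, 001w, 100w, 010w, 000w \in \C$. (In degenerate configurations --- small $n$, or $U_1$ and $U_2$ reaching past every $U_i$ with $i \geq 4$ --- one instead records $111w, 001w, 100\ldots0, 010\ldots0 \in \C$, exactly as in the proof of Proposition~\ref{exisitingresults}(3) and Proposition~4.3.1 of \cite{obatake2016}.)

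To finish, apply $\phi_\C$. Writing $x^w := \prod_{i \in \supp(w)} x_i$, one has $\phi_\C(p_{111w} p_{000w}^2) = x_1 x_2 x_3 (x^w)^3 = \phi_\C(p_{100w} p_{010w} p_{001w})$, and likewise $\phi_\C(p_{111w}) = x_1 x_2 x_3\, x^w = \phi_\C(p_{100\ldots0} p_{010\ldots0} p_{001w})$. Hence $p_{111w} p_{000w}^2 - p_{100w} p_{010w} p_{001w}$ (respectively $p_{111w} - p_{100\ldots0} p_{010\ldots0} p_{001w}$) lies in $\ker \phi_\C = I_\C$, and as the codewords involved are pairwise distinct this is a nonzero cubic binomial of exactly the asserted form with $v = w$.

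The main obstacle is the middle paragraph: turning the geometric definition of a $2$-piercing into the precise statement that $U_3$ sits in one context $w$ and that every $U_1U_2$-subregion of that cell splits. This is what distinguishes a $2$-piercing from the weaker hypothesis of Proposition~\ref{exisitingresults}(3) --- a triple point with general pairwise intersections --- which only yields a cubic with possibly unequal $v$ and $w$; forcing $v = w$ is exactly where the clause ``adds exactly $2^k$ zones'' must be combined with ``the boundary meets only the $k$ specified boundaries,'' and where care is needed to exclude degenerate pictures (nested place fields, disconnected intersections, bounded outer cells) that well-formedness is designed to rule out.
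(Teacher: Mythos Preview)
The paper does not actually prove Lemma~\ref{2piercingnecc}: it is presented as a summary of part~3 of Proposition~\ref{exisitingresults} (quoted from \cite{GOY16}) specialized via Proposition~4.3.1 of \cite{obatake2016}, and no argument is given in the text. So there is no in-paper proof to compare against. Your approach---reducing the $2$-inductively pierced case to a single $2$-piercing, pinning $U_3$ to a fixed context $w$ from $\partial U_3\cap\partial U_i=\emptyset$ for $i\ge 4$, using the ``adds exactly $2^2$ zones'' clause to force all eight codewords $abcw$ into $\C$, and then checking membership in $\ker\phi_\C$---is exactly the kind of argument those references carry, and it correctly isolates why $v=w$ is forced here (every new zone lies inside $U_3$, hence inherits the same context), which is the whole point of the lemma over Proposition~\ref{exisitingresults}(3).

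One step deserves tightening. Your parenthetical dismissal of the case $U_i\subsetneq U_3$ as ``incompatible with $U_3$ adding only four zones'' is not obviously valid as stated: if some $U_i$ sits strictly inside $U_3$, the new zones could in principle carry different contexts, and a direct zone count does not immediately rule this out. The cleaner fix is to invoke the identifying-zone clause in the definition of a $k$-piercing (``identified by a zone in the diagram in which the place field $U_\ell$ is contained''), which directly places $U_3$ inside a single zone of $\mathcal{U}\setminus\{U_1,U_2,U_3\}$ and hence in a single context $w$. With that adjustment your counting argument goes through: only the four $U_1U_2$-subregions of context $w$ can change, each contributes at most $+1$, and a total of $+4$ forces all eight codewords $abcw$, in particular the five you need.
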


We refer to the binomial $p_{111w}p^2_{000v}-p_{100v}p_{010v}p_{001w}$ for $w,v\in \{0,1\}^{n-3}$ as a \textit{cubic of a particular form} or a \textit{particular cubic}.  

In the next section, we investigate sufficient conditions for $1$- and $2$-inductively pierced codes, and strengthen parts 2 and 3 of Proposition~\ref{exisitingresults}.

\section{Main Results}



Throughout the rest of this paper, we assume that $\C=\C(\mathcal{U})$ is a convex open well-formed neural code on $n$ neurons that is realizable in $\R^2$, where $\mathcal{U}=\{U_1,\ldots,U_n\}$ and  $U_i\subset \R^2$, and $I_\C$ is the toric ideal of $\C$. 

Our main results use elements of the toric ideal as signatures for 1- and 2-inductively pierced codes.  Proposition~\ref{1-inductively_pierced_sufficient_condition} gives a sufficient condition for 1-inductively pierced codes, and Theorem~\ref{w=vimp2-ind} gives a sufficient condition for 2-inductively pierced codes.

We will often need to restrict our attention to certain neurons and their corresponding place fields.  For this, we define the restricted code.

\begin{defn}
Let $\C$ be a well-formed code on $n$ neurons, say $\C=\{\sigma_1,\ldots,\sigma_m\}$.  For $\Lambda\subseteq [n]$, define the {\em restricted code} $\C|_{\Lambda}:=\{\hat{\sigma}_1,\ldots,\hat{\sigma}_m\}$ on $n$ neurons, where $\hat{\sigma}_i$ has $\hat{\sigma}_{i_j}=\sigma_{i_j}$ for $j\in\Lambda$ and $\hat{\sigma}_{i_k}=0$ for $k\not\in\Lambda$ (here $\sigma_{i_\ell}$ denotes the $\ell$th component of the codeword $\sigma_i\in \C$).  
\end{defn}

\begin{ex}
Consider the code $\C = \{000,100,001,101,011,111\}$ from Example~\ref{realization}.  If we want to restrict our attention to the firing activity of neurons 1 and 2, we take $\Lambda=\{1,2\}$, so then $\C|_\Lambda=\{000,100,000,100,010,110\}=\{000,100,010,110\}$.  
\end{ex}

\begin{prop}\label{cubic_implies_2-piercing}
If there exists a cubic element of $I_\C$ of the form $p_{111w}p_{000v}^2-p_{100w}p_{010v}p_{001v}$ (where $w,v\in \{0,1\}^{n-3}$), then $\C|_{\{1,2,3\}}$ is 2-inductively pierced.
\end{prop}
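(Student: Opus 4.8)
The plan is to translate the algebraic hypothesis first into the statement that five specific codewords lie in $\C$, then into geometric constraints on $U_1,U_2,U_3$, then to identify $\C|_{\{1,2,3\}}$ exactly, and finally to read off the inductive‑piercing reduction.

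\textbf{From the cubic to codewords.} For $p_{111w}p_{000v}^2-p_{100w}p_{010v}p_{001v}$ to even be an element of $\ff_2[p_c \mid c\in\C]$, each of $111w$, $000v$, $100w$, $010v$, $001v$ must be a codeword; and once they are, $\phi_\C$ sends both monomials to $x_1x_2x_3\bigl(\prod_{i\in\supp(w)}x_i\bigr)\bigl(\prod_{i\in\supp(v)}x_i\bigr)^2$, so the binomial automatically lies in $I_\C$. Hence the hypothesis is exactly the assertion that $\{111w,\,000v,\,100w,\,010v,\,001v\}\subseteq\C$. Restricting these codewords to their first three coordinates (and remembering $00\cdots0\in\C|_{\{1,2,3\}}$ always) gives $\{000,100,010,001,111\}\subseteq\C|_{\{1,2,3\}}$. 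Since $\C|_{\{1,2,3\}}=\C(\{U_1,U_2,U_3\})$, this says precisely that $U_1\cap U_2\cap U_3\neq\emptyset$ and that each $U_i$ has a nonempty private region $U_i\setminus\bigcup_{j\neq i}U_j$.

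\textbf{Pinning down $\C|_{\{1,2,3\}}$.} I claim these conditions force $\C|_{\{1,2,3\}}=\{000,100,010,001,110,101,011,111\}$, the full three‑neuron code; it suffices to show each ``pairwise'' codeword $110,101,011$ is present. Say $011\notin\C|_{\{1,2,3\}}$, i.e.\ $U_2\cap U_3\subseteq U_1$. First, no point of $\partial U_1$ can lie in both $\overline{U_1\cap U_2}$ and $\overline{U_1\cap U_3}$: such a point, being on $\partial U_1$ and — by $U_2\cap U_3\subseteq U_1$ — not in both $U_2$ and $U_3$, would have to lie on a second place‑field boundary; if that boundary is $\partial U_2$ (resp.\ $\partial U_3$) alone, the four local regions around the point would exhibit the codeword $011$, and if it is both $\partial U_2$ and $\partial U_3$ then three boundaries meet there, contradicting well‑formedness. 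Hence $M:=(U_1\cap U_2)\cup(U_1\cap U_3)$ — which is connected, being the union of two convex sets sharing the nonempty convex set $U_2\cap U_3$ — has closure meeting $\partial U_1$ in two disjoint nonempty pieces, and a connected set meeting the boundary of the simply connected region $U_1$ in two disjoint boundary arcs separates $U_1$. So $U_1\setminus(U_2\cup U_3)=U_1\setminus M$ has at least two components, each nonempty since $100\in\C|_{\{1,2,3\}}$; this disconnectedness contradicts the connected‑zone requirement of well‑formedness (directly when $n=3$, since that set is then literally the zone of $100$, and via the zones of $\mathcal{U}$ mapping onto $100$ when $n>3$). The same argument with the neurons permuted rules out $110$ and $101$, so all eight codewords occur.

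\textbf{Reading off the piercing, and the obstacle.} Once $\C|_{\{1,2,3\}}$ is the full three‑neuron code it is realized well‑formedly by three mutually overlapping disks, and in any realization no two of $U_1,U_2,U_3$ are nested or disjoint, so $\partial U_3$ meets exactly $\partial U_1$ and $\partial U_2$; since $\{U_1,U_2\}$ realizes $\C|_{\{1,2\}}=\{000,100,010,110\}$ with $4$ zones and $\{U_1,U_2,U_3\}$ has $8$ zones, $U_3$ adds $8-4=2^2$ zones and is a $2$‑piercing of $U_1,U_2$; removing neuron $3$ leaves $\C|_{\{1,2\}}$, where $U_2$ is a $1$‑piercing of $U_1$ (adding $4-2=2^1$ zones) and then $U_1$ stands alone, so chaining these removals exhibits $\C|_{\{1,2,3\}}$ as $2$‑inductively pierced. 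The hard part is the middle step: the combinatorics forces a missing pairwise codeword to produce a disconnected private region, but converting that into a genuine failure of well‑formedness of the ambient realization $\mathcal{U}$ (not merely of the a~priori possibly ill‑formed induced realization on $\{1,2,3\}$), and handling place‑field boundaries that cross in more than two points, is where the real work lies; everything after it is routine.
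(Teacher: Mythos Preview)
Your argument follows the paper's outline exactly: read off from the cubic that $000,100,010,001,111\in\C|_{\{1,2,3\}}$, invoke well-formedness to upgrade this to $\C|_{\{1,2,3\}}=\text{A1}$, and then observe that A1 is $2$-inductively pierced. The paper dispatches the middle step with the single clause ``Since the code is well formed, it must also contain the pairwise intersections,'' offering no further justification; you instead supply a topological argument for why each pairwise zone must appear, and you candidly flag the difficulty that well-formedness is hypothesized only for the ambient realization $\mathcal{U}$, not for the sub-arrangement $\{U_1,U_2,U_3\}$. That concern is legitimate and applies equally to the paper's terse proof---so your treatment is, if anything, more scrupulous than the original, and the residual gap you identify is one the paper simply leaves unaddressed.
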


\begin{proof}
The existence of such a cubic in $I_\C$ implies that the code contains the triple intersection of $U_1, U_2, \text{and\ } U_3$, so $U_1\cap U_2\cap U_3 \neq \emptyset$, along with the associated singleton zones, so $U_i\setminus (U_j\cup U_k)\neq \emptyset$ for $i,j,k\in \{1,2,3\}$, with $i\neq j \neq k$. Since the code is well formed, it must also contain the pairwise intersections, so $U_i\cap U_j\neq \emptyset$ for $i,j\in\{1,2,3\}$. Thus, if all other neurons are removed, the remaining code is clearly 2-inductively pierced, since $\C|_{\{1,2,3\}}=$A1.
\end{proof}

\begin{cor}\label{1-inductively_pierced_violation}
If $p_{111w}p_{000v}^2-p_{100w}p_{010v}p_{001v} \in I_\C$, then $\C$ is not 1-inductively pierced.
\end{cor}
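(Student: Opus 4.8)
The plan is to derive this directly from Proposition~\ref{cubic_implies_2-piercing} together with the definition of a $k$-inductively pierced code and the standing assumption that a $k$-inductively pierced code is not $(k+1)$-inductively pierced. First I would invoke Proposition~\ref{cubic_implies_2-piercing}: the hypothesis $p_{111w}p_{000v}^2-p_{100w}p_{010v}p_{001v}\in I_\C$ gives that $\C|_{\{1,2,3\}}$ is $2$-inductively pierced, and moreover (as in the proof of that proposition) $\C|_{\{1,2,3\}}=\mathrm{A1}$, which is genuinely $2$-inductively pierced but not $1$-inductively pierced.

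Next I would argue that being $1$-inductively pierced is inherited by restrictions, or contrapositively, that a restricted code failing to be $1$-inductively pierced forces the original code to fail as well. The idea: if $\C$ were $1$-inductively pierced, there is a sequence of $1$-piercings $U_{\lambda_1}, U_{\lambda_2}, \ldots$ whose successive removal reduces $\C$ to the trivial code. Restricting this entire process to the neuron set $\{1,2,3\}$ — i.e., intersecting each step with the coordinate projection that zeroes out all coordinates outside $\{1,2,3\}$ — should yield a valid sequence of $1$-piercings exhibiting $\C|_{\{1,2,3\}}$ as $1$-inductively pierced. Geometrically, deleting a place field $U_j$ with $j\notin\{1,2,3\}$ from a realization cannot turn a non-piercing configuration among $U_1,U_2,U_3$ into one, and deleting $U_j$ with $j\in\{1,2,3\}$ that was a $1$-piercing in $\C$ remains a $1$-piercing after the other outside fields are removed, since removing nested or disjoint fields only decreases the number of boundary intersections and zone counts in the controlled way required by the definition. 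This gives the contradiction: $\C|_{\{1,2,3\}}=\mathrm{A1}$ would be both $1$-inductively pierced and (being $2$-inductively pierced by Proposition~\ref{cubic_implies_2-piercing}) not $1$-inductively pierced.

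The main obstacle I anticipate is making the "restriction preserves $1$-inductive piercing" step rigorous, because the definition of $k$-inductively pierced is realization-dependent, so one must track how a fixed realization $\mathcal{U}$ of $\C$ restricts to a realization of $\C|_{\Lambda}$ and check that the inductive piercing sequence transports cleanly — in particular that each $U_{\lambda_i}$ remains a $k$-piercing identified by a zone in the smaller arrangement, with the $2^k$-new-zones condition still holding. An alternative, possibly cleaner route avoids this entirely: directly exhibit that $I_{\C|_{\{1,2,3\}}}$ (which equals $I_{\mathrm{A1}}$) is generated by quadratics, hence by part~2 of Proposition~\ref{exisitingresults} any $1$-inductively pierced code must have such an ideal, yet we know from the A1 example that A1 is not $1$-inductively pierced; then lift to $\C$ via the same restriction argument but now only needing it in the easier direction. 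In fact the shortest correct argument may simply be: by Proposition~\ref{cubic_implies_2-piercing}, $\C|_{\{1,2,3\}}=\mathrm{A1}$ is $2$-inductively pierced and not $1$-inductively pierced; if $\C$ itself were $1$-inductively pierced, then restricting the piercing process to $\{1,2,3\}$ would make $\mathrm{A1}$ $1$-inductively pierced, a contradiction. I would present this as the proof, with a sentence of geometric justification for why the piercing sequence restricts, and I would flag that the detailed verification of restriction-compatibility follows the same case analysis as in the proof of Proposition~\ref{cubic_implies_2-piercing}.
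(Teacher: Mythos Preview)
Your proposal is correct and lands on essentially the same argument as the paper. The paper's proof is slightly more streamlined in that it avoids formulating a general ``restriction preserves $1$-inductive piercing'' lemma: instead it observes directly that, since $\C|_{\{1,2,3\}}=\mathrm{A1}$, the boundaries of $U_1,U_2,U_3$ pairwise intersect in any realization of $\C$, so none of $U_1,U_2,U_3$ can ever be a $0$- or $1$-piercing while the other two remain; hence any $1$-inductive piercing sequence for $\C$ must exhaust all other place fields first, leaving exactly $\C|_{\{1,2,3\}}=\mathrm{A1}$, which is not $1$-inductively pierced --- contradiction. This is precisely your ``shortest correct argument,'' just with the restriction step replaced by the concrete observation that $U_1,U_2,U_3$ are blocked at every stage. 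Your more general restriction lemma is valid (removing further place fields can only decrease the number of boundaries a given $U_{\lambda_i}$ meets, so a $0$- or $1$-piercing stays one), but the paper sidesteps the need to state or verify it.
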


\begin{proof}
Assume $\C$ is 1-inductively pierced. By Proposition~\ref{cubic_implies_2-piercing}, $\C|_{\{1,2,3\}}$ contains a 2-piercing. Thus, $U_1$ cannot be a 0- or 1-piercing unless either $U_2$ or $U_3$ is removed. The same is true of $U_2$ and $U_3$, so $U_1$, $U_2$, and $U_3$ cannot be 0- or 1-piercings. As place fields that are 0- or 1-piercings are removed, eventually $\C|_{\{1,2,3\}}$ must be all that is left, but $\C|_{\{1,2,3\}}$ has a 2-piercing, thus $\C$ is not 1-inductively pierced, resulting in a contradiction.
\end{proof}


\begin{lem}\label{1-piercingsuff}
Let $\C$ be a well-formed code on $n$ neurons such that in $\C|_{\{1,2\}}$, $U_1$ is a 1-piercing of $U_2$ and there do not exist $i,j,k\in [n]$ such that in $\C|_{\{i,j,k\}}$, $U_i$ is a 2-piercing of $U_j$ and $U_k$.  Then $\C$ is 1-inductively pierced. 
\end{lem}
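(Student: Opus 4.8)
The plan is to prove the lemma by induction on the number $n$ of neurons. When $n=2$ the hypothesis forces $\C|_{\{1,2\}}$ to contain all four patterns on coordinates $1,2$, and deleting the $1$-piercing $U_1$ leaves a code on one neuron, whose toric ideal is $\langle 0\rangle$; that code is $0$-inductively pierced by Proposition~\ref{exisitingresults}, so $\C$ is $1$-inductively pierced. For $n\ge 3$ the strategy is the usual peeling argument: I will exhibit a place field $U_\ell$ that is a $0$- or $1$-piercing in the full realization $\C(\mathcal{U})$, delete it (leaving a code still realized by convex open sets in $\R^2$), and apply the inductive hypothesis to $\C\setminus\{\ell\}$. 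Deleting a $0$- or $1$-piercing preserves well-formedness, since such a field's boundary meets at most one other boundary, so its removal creates no tangency or triple point and merges zones only along a single arc, keeping them connected. The ``no $2$-piercing on any triple'' hypothesis is inherited because $(\C\setminus\{\ell\})|_{\{i,j,k\}}=\C|_{\{i,j,k\}}$ whenever $\ell\notin\{i,j,k\}$. If $\ell\notin\{1,2\}$ then also $(\C\setminus\{\ell\})|_{\{1,2\}}=\C|_{\{1,2\}}$, so the inductive hypothesis applies directly. If $\ell\in\{1,2\}$ I split into cases: either some pair of boundaries still crosses in $\C\setminus\{\ell\}$, in which case that pair is a new $1$-piercing and induction applies, or no two boundaries cross, in which case $I_{\C\setminus\{\ell\}}=\langle 0\rangle$ and $\C\setminus\{\ell\}$ is $0$-inductively pierced by Proposition~\ref{exisitingresults}. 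In every case $\C\setminus\{\ell\}$ is $1$-inductively pierced, and prepending $U_\ell$ shows $\C$ is as well.

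The substantive content is the existence of $U_\ell$. Here I would introduce the \emph{overlap graph} $H$ on vertex set $[n]$, with $i\sim j$ exactly when $\partial U_i\cap\partial U_j\neq\emptyset$; by convexity and well-formedness each such intersection is a transversal pair of points (four or more crossing points of two convex curves would disconnect a zone), so every edge of $H$ records a genuine proper overlap of two place fields, and a field whose $H$-degree is $0$ or $1$ is easily checked, by counting the zones it adds, to be a $0$- or $1$-piercing. The goal is therefore to show that $H$ has a vertex of degree at most one — equivalently, that $H$ is a forest — after which a leaf or isolated vertex is the desired $U_\ell$. To establish the forest property I would try to show that a cycle in $H$ produces a triple $i,j,k$ with $U_i$ a $2$-piercing of $U_j$ and $U_k$: choosing $i,j,k$ so that $\partial U_i$ crosses both $\partial U_j$ and $\partial U_k$, and using that the cycle forces $U_j$ and $U_k$ into a suitable overlapping position, one checks that $U_i$ splits all four zones of $\C|_{\{j,k\}}$ and hence adds exactly $2^2$ zones, contradicting the hypothesis.

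The hard part will be exactly this last geometric step. Extracting a $2$-piercing from a cycle in $H$ requires a careful case analysis of how three convex open planar sets can sit relative to one another under well-formedness, together with an exact description of which of those configurations realize a genuine $2$-piercing. The most delicate case is three pairwise-overlapping fields whose triple intersection is empty: one must show that such a triple is either incompatible with the hypotheses or still leaves a removable place field elsewhere in the realization. It is here that planarity and convexity enter essentially — the analogous statement fails for arbitrary sets or in higher dimension — and it is here that essentially all of the work lives; the peeling induction of the first paragraph is then purely formal bookkeeping.
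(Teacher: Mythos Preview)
Your approach differs substantially from the paper's. The paper gives a three-sentence contradiction argument centered on the distinguished pair $U_1,U_2$: assuming $\C$ is not $1$-inductively pierced, it asserts there must be some $U_s$ meeting both $U_1$ and $U_2$ so that fewer than four zones are created yet none of $U_1,U_2,U_s$ is a $2$-piercing of the others, and then claims this configuration is incompatible with well-formedness. There is no induction, no overlap graph, and no peeling.

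Your peeling-via-overlap-graph strategy is more systematic, but the crucial step---that $H$ is a forest---has a real gap. You argue that a cycle in $H$ yields a triple $i,j,k$ with $U_i$ a $2$-piercing of $U_j,U_k$, taking $j,k$ to be the cycle-neighbors of $i$ and asserting that ``the cycle forces $U_j$ and $U_k$ into a suitable overlapping position.'' For an induced cycle of length $\ge 4$ this is unjustified: then $j$ and $k$ are non-adjacent in $H$, so $\partial U_j\cap\partial U_k=\emptyset$, and if $U_j,U_k$ are disjoint, adding $U_i$ to $\{U_j,U_k\}$ creates only three new zones (one for $U_i$ alone and one for each pairwise overlap), not four---so $U_i$ is \emph{not} a $2$-piercing in $\C|_{\{i,j,k\}}$ and no contradiction with the hypothesis arises. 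To rescue the argument you would need to show that induced cycles of length $\ge 4$ in $H$ are themselves impossible in a well-formed convex realization, which is a separate geometric claim your sketch does not address (your ``delicate case'' discussion concerns only triangles). A smaller issue in the same spirit: well-formedness is a hypothesis on the full realization $\mathcal{U}$, not on restrictions, so even in the triangle case you must rule out that other place fields cover the bounded ``hole'' and thereby keep every zone of $\C$ connected.
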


\begin{proof}
Suppose towards contradiction that $\C$ is not 1-inductively pierced. Then, there must be some neuron $s$ such that $U_s$ intersects both $U_1$ and $U_2$ such that fewer than 4 zones are created and neither $U_1$ nor $U_2$ is a 2-piercing of $U_s$. However, this can only happen if $\C$ is not well-formed. But by hypothesis, $\C$ is well-formed, resulting in a contradiction.
\end{proof}

\begin{thm}\label{w=vimp2-ind}
Let $\C$ be a well-formed code on $n$ neurons, and let $I_\C$ be its toric ideal. If there exists a cubic of the form $p_{111w}p_{000v}^2-p_{100w}p_{010v}p_{001v} \in I_\C$ and in all such cases $w=v$, then $\C$ is 2-inductively pierced. 
\end{thm}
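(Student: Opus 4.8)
The plan is to combine Proposition~\ref{cubic_implies_2-piercing}, which already tells us that the existence of \emph{any} particular cubic forces a triple intersection (hence a local 2-piercing on the corresponding three neurons), with the hypothesis ``$w=v$ in all such cases'' to rule out the obstruction identified in Corollary~\ref{1-inductively_pierced_violation}. The key point is that the bad cubic $p_{111w}p_{000v}^2-p_{100w}p_{010v}p_{001v}$ with $w\neq v$ is precisely the algebraic fingerprint that prevents a triple intersection from being resolved by a sequence of $0$- and $1$-piercings; when $w=v$ the cubic is exactly the A1-type relation, which corresponds to a genuine 2-inductively pierced configuration on three neurons. So the structure of the argument is: first establish that $\C$ contains at least one triple intersection (so it is not merely $1$-inductively pierced), and then show that every triple intersection present can be ``peeled off'' as a 2-piercing in the inductive process, while all remaining piercings are $0$- or $1$-piercings.

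First I would note that since a cubic of the stated form exists, Proposition~\ref{cubic_implies_2-piercing} gives a triple intersection $U_i\cap U_j\cap U_k\neq\emptyset$ for the relevant neurons, so $\C$ is not $0$- or $1$-inductively pierced; by the standing convention that a $k$-inductively pierced code is not $(k+1)$-inductively pierced, it suffices to show $\C$ is $2$-inductively pierced. Next I would argue, by the well-formedness of $\C$ together with part~3 of Proposition~\ref{exisitingresults} (read contrapositively), that \emph{every} triple intersection of three place fields in $\C$ must be witnessed by a particular cubic in $I_\C$; by hypothesis such a cubic has $w=v$, so the three neurons involved restrict to the code A1 (as in Proposition~\ref{cubic_implies_2-piercing}'s proof, with $v=w$), meaning that locally those three place fields form a mutual 2-piercing configuration and no ``nested'' or degenerate triple intersection occurs. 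In particular the hypothesis is exactly what we need to guarantee that the obstruction of Corollary~\ref{1-inductively_pierced_violation} never appears in a form that would block the $2$-inductive peeling.

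I would then run the inductive peeling argument: repeatedly remove a place field $U_\lambda$ that is a $0$- or $1$-piercing (updating the code by zeroing out the $\lambda$th coordinate, as in the definition of $k$-inductively pierced). If at some stage no $0$- or $1$-piercing exists but the code is nontrivial, I must produce a $2$-piercing to remove. Here is where Lemma~\ref{1-piercingsuff} and the $w=v$ hypothesis do the work: because every triple intersection is of A1-type, the three place fields meeting in that triple point are pairwise in ``general'' position and each adds exactly $2^2=4$ zones relative to the other two, so one of them is a legitimate $2$-piercing identified by a zone and can be removed; after removal the triple intersection is gone and the remaining relations in $I_\C$ of the bad form are again of $w=v$ type (restricting the homomorphism $\phi_\C$ to the smaller code preserves this), so the induction continues. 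Thus every piercing removed is a $0$-, $1$-, or $2$-piercing, and $\C$ is $2$-inductively pierced.

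The main obstacle I anticipate is the step asserting that \emph{every} triple intersection in $\C$ is detected by a particular cubic, and — more delicately — that the $w=v$ condition genuinely forces the \emph{local} configuration on those three neurons to be A1 rather than some other arrangement that still produces the same cubic. One has to be careful that extra codewords coming from the other $n-3$ neurons do not create a situation where the three place fields have a triple intersection but, within the full diagram, cannot be peeled off in the right order (for instance if removing $U_\lambda$ were blocked because some \emph{other} place field depends on it). Handling this cleanly likely requires invoking well-formedness to argue that $2$-piercings can always be taken ``last'' in the peeling order, or an argument by induction on $n$ that reduces to the three-neuron case via the restricted code, using that $I_{\C|_\Lambda}$ inherits the relevant cubics from $I_\C$.
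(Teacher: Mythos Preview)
Your overall strategy is in the right direction but contains a genuine gap, which you partially identify yourself. The step where you assert that ``the $w=v$ hypothesis forces the three neurons involved to restrict to A1'' does not do the work you need it to do: Proposition~\ref{cubic_implies_2-piercing} already gives $\C|_{\{1,2,3\}}=\text{A1}$ from the existence of \emph{any} particular cubic, regardless of whether $w=v$. The real question is not whether the restricted three-neuron code is A1, but whether some \emph{fourth} place field $U_j$ obstructs $U_1,U_2,U_3$ from being removable as $2$-piercings in the full code $\C$. Your peeling argument never actually uses the $w=v$ hypothesis to rule out such an obstruction; you invoke Lemma~\ref{1-piercingsuff} and the phrase ``A1-type,'' but neither of these addresses the possibility that a $U_j$ sits over the triple intersection in a way that blocks the $2$-piercing removal.

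The paper's proof avoids this difficulty by arguing the contrapositive: assume $\C$ is \emph{not} $2$-inductively pierced, locate an obstructing place field $U_j$ (which must either contain or partially contain the triple intersection $U_1\cap U_2\cap U_3$), and then \emph{explicitly construct} a particular cubic with $w\neq v$ from the zones that $U_j$ creates. This is the missing geometric idea in your attempt: the connection between ``obstruction to $2$-inductive piercing'' and ``existence of a $w\neq v$ cubic'' is made concrete by exhibiting specific codewords (e.g., $0000$, $0100$, $0010$, $1001$, $1111$ when $U_j$ contains the triple intersection) that assemble into the required binomial. Your direct peeling approach would need an equivalent argument in the other direction---that the absence of any $w\neq v$ cubic forces every potential obstruction $U_j$ away from the triple intersection---and you have not supplied one.
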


\begin{proof}

We will prove the contrapositive.  That is, we will assume that $\C$ is not 2-inductively pierced and show that whenever the toric ideal contains cubics of the particular form, at least one such cubic satisfies $w\neq v$. 


By Proposition~\ref{cubic_implies_2-piercing}, $p_{111w}p_{000v}^2-p_{100w}p_{010v}p_{001v} \in I_\C$ implies that $\C|_{\{1,2,3\}}$ is 2-inductively pierced, so each of $U_1$, $U_2$, $U_3$ is a two-piercing of the other two.  

Suppose that there exists some $U_j$ that obstructs $U_1$, $U_2$, and $U_3$ from being 2-piercings of each other in $\C$.  If $U_j$ contains the triple intersection $U_1\cap U_2\cap U_3$, then all zones other than the triple intersection are partially contained in $U_j$ since $\C$ is well-formed.  If $U_j$ does not contain the triple intersection, then $U_j$ partially contains it.  In either case, we have the zones necessary to construct a cubic with $v\neq w$, as illustrated in Figure~\ref{proof_example}.  

In particular, as in the case of the left figure of Figure~\ref{proof_example}, when $U_4$ contains the entire intersection $U_1\cap U_2\cap U_3$, we see the zones corresponding to the codewords 0000, 0100, 0010, 1001, 1111.  So, we can construct the particular cubic $p_{1111}p^2_{0000}-p_{1001}p_{0100}p_{0010}$ which is in $I_\C$ and is such that $w=1\neq 0=v$ (here $w,v$ are the fourth components of the codewords).  Generalizing this, we can form a particular cubic with $w\neq v$ whenever some $U_j$ contains a triple intersection $U_1\cap U_2\cap U_3$ via the codewords corresponding to the intersection of all 4 zones, a region outside all of $U_1,U_2,U_3$, the zone corresponding to the pairwise intersection of $U_j$ and $U_1$ and the other singleton zones outside of $U_1,U_j$ corresponding to the regions within $U_2$ and $U_3$.  Similarly, in the case of the right figure of Figure~\ref{proof_example}, when $U_4$ partially contains the intersection $U_1\cap U_2\cap U_3$, we see the zones corresponding to the codewords 0000, 0001, 0100, 1010, 1111, and so we can construct the particular cubic $p_{1111}p^2_{0000}-p_{1010}p_{0100}p_{0001}$, where here $w=1\neq 0=v$ are the third components of the codewords.  Essentially this case is identical to the first case; the roles of $U_3$ and $U_4$ are switched.  As such we can still obtain the necessary zones to obtain a particular cubic via the generalization as above. 

Furthermore, using the ideas in the proof of Corollary~\ref{1-inductively_pierced_violation}, the only way for $\C$ to fail to be 2-inductively pierced is if there exists such a $U_j$, that is if there is a $U_j$ such that $U_1\cap U_2\cap U_3\subseteq U_j$ or $U_j\cap \left(U_1\cap U_2\cap U_3\right)\neq \emptyset$.  That is to say, there must be a curve $U_j$ that obstructs $U_1$, $U_2$, and $U_3$ from all being two piercings of each other, so $U_j$ cannot be a 0- or 1-piercing. 

Thus, there must exist a cubic of form $p_{111w}p_{000v}^2-p_{100w}p_{010v}p_{001v} \in I_\C$ where $v\neq w$.
\end{proof}

\begin{prop}\label{1-inductively_pierced_sufficient_condition}
Let $\C$ be a well-formed code on $n$ neurons. If $I_\C\neq \langle 0 \rangle$ and there is no cubic of the form $p_{111w}p_{000v}^2-p_{100w}p_{010v}p_{001v} \in I_\C$, then $\C$ is 1-inductively pierced. 
\end{prop}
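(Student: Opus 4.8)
The plan is to prove the contrapositive, mirroring the logic of Corollary~\ref{1-inductively_pierced_violation} but now using the full hypothesis. So I would assume $\C$ is not $1$-inductively pierced, together with $I_\C \neq \langle 0\rangle$, and aim to produce a cubic of the particular form $p_{111w}p_{000v}^2 - p_{100w}p_{010v}p_{001v}$ in $I_\C$ (with no constraint on whether $w=v$ or $w\neq v$ — the statement only forbids the existence of \emph{any} such cubic). The first step is to extract geometric content from $I_\C \neq \langle 0\rangle$: by Proposition~\ref{exisitingresults}(1), $\C$ is not $0$-inductively pierced, so in any well-formed realization $\mathcal{U}$ there is some place field whose boundary meets another boundary, i.e. some genuine piercing occurs. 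Running the inductive removal process — repeatedly stripping off $0$-piercings — either terminates with the all-zeros code (impossible, since that would exhibit $\C$ as $0$-inductively pierced, contradicting $I_\C\neq\langle 0\rangle$ via a sequence of $0$-piercings... more carefully, one must argue the process cannot strip everything down) or gets stuck at a subcode containing a piercing of order $\geq 1$.

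The core of the argument is then a case split on the order of the obstruction. Since $\C$ is not $1$-inductively pierced, by Lemma~\ref{1-piercingsuff} (contrapositive) there must exist $i,j,k\in[n]$ such that in $\C|_{\{i,j,k\}}$, $U_i$ is a $2$-piercing of $U_j$ and $U_k$ — here is the key reduction, because Lemma~\ref{1-piercingsuff} says precisely that the \emph{only} obstruction to being $1$-inductively pierced (given well-formedness and the existence of at least one $1$-piercing somewhere) is the presence of a triple-intersection-type configuration. Once we have such an $i,j,k$, the code $\C|_{\{i,j,k\}}$ contains the triple intersection $U_i\cap U_j\cap U_k \neq \emptyset$, all three singleton zones, and (by well-formedness) all three pairwise intersections; that is exactly the configuration A1. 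Reading off the codewords of these zones and applying the homomorphism $\phi_\C$ gives a binomial relation in $I_\C$: with $w,v\in\{0,1\}^{n-3}$ recording the fixed coordinates of these zones, $\phi_\C(p_{111w}p_{000v}^2) = \phi_\C(p_{100w}p_{010v}p_{001v})$ since both sides are the monomial $x_i^2 x_j^2 x_k^2 \cdot (\text{common tail})$ — so the particular cubic lies in $I_\C$, contradicting the hypothesis.

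The main obstacle I anticipate is the first step — rigorously ruling out that the $0$-piercing removal process strips $\C$ all the way down to the trivial code. One has to argue that if every place field could be removed as a $0$-piercing in sequence, then $\C$ would be $0$-inductively pierced, hence $I_\C = \langle 0\rangle$ by Proposition~\ref{exisitingresults}(1), contradicting our assumption; the subtlety is that the definition of $k$-inductively pierced requires knowledge of a realization, and one must check that the removal process is well-defined and that "getting stuck" really does produce a subcode in which some $U_i$ is a $k$-piercing for $k\geq 1$ of \emph{other place fields present in that subcode}, so that Lemma~\ref{1-piercingsuff} applies to the appropriate triple. A secondary technical point is making sure the zones witnessing the A1 configuration inside $\C|_{\{i,j,k\}}$ actually correspond to codewords of $\C$ itself (not just of the restriction) with consistent tails $w$ and $v$, so that the resulting binomial genuinely sits in $I_\C$; this should follow from the fact that restriction only zeroes out coordinates and the zones in question persist in $\C$, but it deserves an explicit sentence.
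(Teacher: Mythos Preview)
Your proposal is correct and rests on exactly the same three ingredients as the paper's proof: Proposition~\ref{exisitingresults}(1), Lemma~\ref{1-piercingsuff}, and the link between restricted $2$-piercings and particular cubics (Lemma~\ref{2piercingnecc}/Proposition~\ref{cubic_implies_2-piercing}). The only difference is directional. The paper argues directly rather than by contrapositive: from the absence of a particular cubic it concludes (contrapositive of Lemma~\ref{2piercingnecc}) that no triple $i,j,k$ gives a $2$-piercing in $\C|_{\{i,j,k\}}$; from $I_\C\neq\langle 0\rangle$ and Proposition~\ref{exisitingresults}(1) it concludes $\C$ is not $0$-inductively pierced, hence some pair of boundary curves cross generically, yielding a $1$-piercing in some $\C|_{\{\ell,m\}}$; then Lemma~\ref{1-piercingsuff} applies immediately to give the conclusion. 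The payoff of the direct framing is that both obstacles you flag evaporate: there is no need to run a $0$-piercing removal process or reason about where it gets stuck (one only needs that ``not $0$-inductively pierced'' forces two boundaries to cross, which is immediate from the definition in a well-formed realization), and there is no need to manufacture codewords of $\C$ with matching tails $w,v$ --- that implication is being \emph{assumed} absent rather than \emph{constructed}.
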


\begin{proof}
The lack of a cubic of the form $p_{111w}p_{000v}^2-p_{100w}p_{010v}p_{001v} \in I_\C$ implies that there do not exist $i,j,k\in [n]$ such that in $\C|_{\{i,j,k\}}$, $U_i$ is a 2-piercing of $U_j$ and $U_k$. Since $I_\C \neq \langle 0 \rangle$, $\C$ cannot be 0-inductively pierced by Proposition~\ref{exisitingresults}, thus there must be some curves that intersect generally, i.e. there are some $U_\ell$ and $U_m$ such that $U_\ell\cap U_m\neq \emptyset$, where $\partial U_\ell\cap \partial U_m\neq \emptyset$. Thus by Lemma~\ref{1-inductively_pierced_violation}, $\C$ must be 1-inductively pierced.
\end{proof}

\begin{figure}
\begin{center}
\begin{tikzpicture}[scale=0.7]
\draw (-1.4,0) circle (2cm);
\draw (1.4,0) circle (2cm);
\draw (0,2) circle (2cm);
\draw (0,.8) circle (1.75cm);

\node[label=left :  $U_1$] (1) at (-3.3,0) {};
\node[label=right :  $U_2$] (2) at (3.25,0) {};
\node[label=above :  $U_3$] (3) at (0,3.7) {};
\node[label=above right: $U_4$] (4) at (0,2.2) {};

\node[label=above: {\scriptsize $0010$}] (5) at (-0.5, 2.8) {};
\node[label=above: {\scriptsize $0000$}] (6) at (-2.8,2) {};
\node[label=above: {\scriptsize $1111$}] (7) at (0,0) {};
\node[label=above: {\scriptsize $0100$}] (8) at (2.1,-1.7) {};
\node[label=above: {\scriptsize $1001$}] (9) at (-1,-0.7) {};
\end{tikzpicture}
\qquad 
\begin{tikzpicture}[scale=0.7]
\draw (-1.4,0) circle (2cm);
\draw (1.4,0) circle (2cm);
\draw (0,2) circle (2cm);
\draw (0,.8) circle (1.75cm);

\node[label=left :  $U_1$] (1) at (-3.3,0) {};
\node[label=right :  $U_2$] (2) at (3.25,0) {};
\node[label=above :  $U_4$] (3) at (0,3.7) {};
\node[label=above right: $U_3$] (4) at (0,2.2) {};

\node[label=above: {\scriptsize $0001$}] (5) at (-0.5, 2.8) {};
\node[label=above: {\scriptsize $0000$}] (6) at (-2.8,2) {};
\node[label=above: {\scriptsize $1111$}] (7) at (0,0) {};
\node[label=above: {\scriptsize $0100$}] (8) at (2.1,-1.7) {};
\node[label=above: {\scriptsize $1010$}] (9) at (-1,-0.7) {};
\end{tikzpicture}
\end{center}
\caption{Examples of realizations of codes that are not 2-inductively pierced and have cubics of a particular form where $v\neq w$ in $I_\C$. On the left, $U_4$ contains the entire triple intersection $U_1\cap U_2\cap U_3$, and on the right $U_4$ contains part of the triple intersection.}
\label{proof_example}
\end{figure}
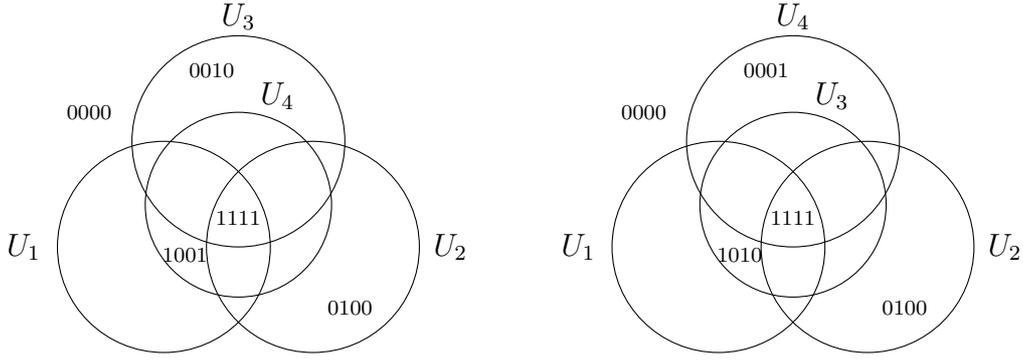








The existence of particular cubic binomials in the toric ideal can tell us much about the code. However, computing the entire toric ideal is computationally expensive; we would much rather only look at generating sets. Unfortunately, the existence of particular cubics in the toric ideal does not imply they are generators. 

\begin{ex}\label{A1}
As given in \cite{obatake2016}, one generating set for the toric ideal of the code A1 = \{000, 100, 010, 001, 110, 101, 011, 111\} is $\{p_{111}-p_{100}p_{010}p_{001},\ p_{110}-p_{100}p_{010},\ p_{101}-p_{100}p_{001},\ p_{011}-p_{010}p_{001}\}$.  This generating set has a particular cubic.  Another generating set for for $I_{\text{A1}}$, which consists only of quadratics is $\{p_{110}-p_{100}p_{010},\ p_{101}-p_{100}p_{001},\ p_{011}-p_{010}p_{001},\ p_{111}-p_{110}p_{001}\}$.
\end{ex}

From the second generating set for $I_{A1}$ in Example~\ref{A1}, we notice that \textit{a priori} we may not see the existence of a particular cubic in the toric ideal of a code from the generators directly.  However, we show that there is a class of quadratic generators that can imply the existence of the particular cubics in the toric ideal.

\begin{defn} \label{quadofpartform}
Let $\C$ be code on $n$ neurons and $w,v\in \{0,1\}^{n-3}$. A pair of quadratics having one of the following forms:
\begin{align*}
\begin{cases} p_{111w}p_{000v}-p_{110w}p_{001v}\\
p_{110w}p_{000v}-p_{100w}p_{010v}\\
\end{cases}
\mathrm{or\ }
\begin{cases} p_{111w}p_{000v}-p_{101w}p_{010v}\\
p_{101w}p_{000v}-p_{100w}p_{001v}\\
\end{cases}
\mathrm{or\ }
\begin{cases} p_{111w}p_{000v}-p_{011w}p_{100v}\\
p_{011w}p_{000v}-p_{010w}p_{001v}
\end{cases}
\end{align*}
is called \textit{a friendly quadratic pair}.
\end{defn}

\begin{prop}\label{friendlyquadraticimpcubic}
Let $w,v\in \{0,1\}^{n-3}$.  If any of the friendly quadratic pairs is in the toric ideal $I_\C$ of $\C$, then $I_\C$ contains a cubic of the form $p_{111w}p_{000v}^2-p_{100w}p_{010v}p_{001v}$, and thus by Corollary~\ref{1-inductively_pierced_violation}, $\C$ is not 1-inductively pierced.


\end{prop}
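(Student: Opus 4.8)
The approach is purely algebraic and exploits only that $I_\C$ is an ideal, so that a congruence modulo $I_\C$ may be multiplied by any element of $\mathbb{F}_2[p_c\mid c\in\C]$ and congruences may be chained. The plan is: from a friendly quadratic pair, read off two congruences modulo $I_\C$, multiply each by an appropriate degree-one monomial so that they link through a common cubic monomial, and thereby deduce $p_{111w}p_{000v}^{2}\equiv p_{100w}p_{010v}p_{001v}\pmod{I_\C}$, which is exactly the assertion that the particular cubic lies in $I_\C$. Before computing anything I would observe that the three forms in Definition~\ref{quadofpartform} are carried to one another by the permutations of the neurons $1,2,3$, and that both the hypothesis ``$p_{111w}p_{000v}^{2}-p_{100w}p_{010v}p_{001v}\in I_\C$'' and the conclusion ``$\C$ is not $1$-inductively pierced'' of Corollary~\ref{1-inductively_pierced_violation} are invariant under such relabelings, since the proof of Proposition~\ref{cubic_implies_2-piercing} uses only which codewords appear in $\C$ and is symmetric in neurons $1,2,3$. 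Hence it suffices to treat one of the three forms.

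Take the first form, so that $p_{111w}p_{000v}-p_{110w}p_{001v}\in I_\C$ and $p_{110w}p_{000v}-p_{100w}p_{010v}\in I_\C$; equivalently $p_{111w}p_{000v}\equiv p_{110w}p_{001v}$ and $p_{110w}p_{000v}\equiv p_{100w}p_{010v}$ modulo $I_\C$. Multiplying the first congruence by $p_{000v}$ and then inserting the second gives
\[ p_{111w}p_{000v}^{2}\;\equiv\;p_{110w}p_{001v}p_{000v}\;=\;\bigl(p_{110w}p_{000v}\bigr)p_{001v}\;\equiv\;p_{100w}p_{010v}p_{001v}\pmod{I_\C}, \]
so $p_{111w}p_{000v}^{2}-p_{100w}p_{010v}p_{001v}\in I_\C$. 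I would note that every variable appearing in this cubic already occurs among the variables of the two given quadratics, so it really is a binomial in $\mathbb{F}_2[p_c\mid c\in\C]$; as a check one can also apply $\phi_\C$ and verify that both of its monomials have image $x_1x_2x_3\bigl(\prod_{i\in\supp(w)}x_{i+3}\bigr)\bigl(\prod_{i\in\supp(v)}x_{i+3}\bigr)^{2}$. The same chaining handles the other two forms verbatim: for the second, multiply $p_{111w}p_{000v}\equiv p_{101w}p_{010v}$ by $p_{000v}$ and feed in $p_{101w}p_{000v}\equiv p_{100w}p_{001v}$; for the third, multiply $p_{111w}p_{000v}\equiv p_{011w}p_{100v}$ by $p_{000v}$ and feed in $p_{011w}p_{000v}\equiv p_{010w}p_{001v}$, which produces the particular cubic with the roles of neurons $1$ and $2$ interchanged.

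In every case $I_\C$ contains a cubic of the particular form, and Corollary~\ref{1-inductively_pierced_violation} then immediately yields that $\C$ is not $1$-inductively pierced. I do not expect a genuine obstacle here: the only points needing care are choosing the degree-one multipliers so that the two congruences actually close up into the intended chain, and noticing that in the third form the $w$-block is forced onto the singleton of neuron $2$ rather than of neuron $1$ (harmless, by the symmetry remark of the first paragraph). All the real content — that a particular cubic forces a $2$-piercing inside $\C|_{\{1,2,3\}}$ and hence precludes $1$-inductive piercedness — is already contained in Proposition~\ref{cubic_implies_2-piercing} and Corollary~\ref{1-inductively_pierced_violation}.
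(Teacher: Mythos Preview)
Your proof is correct and essentially identical to the paper's: the paper exhibits the particular cubic as the explicit linear combination $p_{000v}(p_{111w}p_{000v}-p_{110w}p_{001v}) + p_{001v}(p_{110w}p_{000v}-p_{100w}p_{010v})$, which is precisely your congruence chain written additively, and then declares the other two cases similar. Your symmetry reduction and your remark about the $w$-block landing on neuron $2$'s singleton in the third form are slightly more careful than the paper, but the underlying argument is the same.
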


\begin{proof}
Consider the first case, so suppose $I_\C$ contains the first friendly quadratic pair $p_{111w}p_{000v}-p_{110w}p_{001v}, p_{110w}p_{000v}-p_{100w}p_{010v}$.  Recall that $I_\C$ is an ideal, hence is closed under multiplication by elements of the ring.  Since $p_{111w}p_{000v}-p_{110z}p_{001z}\in I_\C$, we get that $p_{000w},p_{001v}\in \mathbb{F}_2[p_c\ |\ c\in \C]$ and so, 
\begin{align*}
p_{111w}p_{000v}^2-p_{100w}p_{010v}p_{001v} =p_{000v}&(p_{111w}p_{000v}-p_{110w}p_{001v})\\&+p_{001v}(p_{110w}p_{000v}-p_{100w}p_{010v})\in I_\C.
\end{align*}  The proof in the other two cases is similar.
\end{proof}


Our main results rely on identifying particular cubics in the toric ideal of a code.  The main advantage of Proposition~\ref{friendlyquadraticimpcubic} is that it allows us to use friendly quadratic pairs that are generators of the toric ideal to quickly determine the existence of particular cubics in the toric ideal, even when they do not explicitly appear in a generating set of $I_\C$.  We illustrate this with an example.

\begin{ex}\label{friendlyexample}
Consider the following neural code on 5 neurons, $\C
=$\{00001, 10001, 01001, 00011, 11001, 10011, 01011, 00111, 11011, 10111, 01111, 11111\}. One generating set for $I_\C$ is 
\begin{align*}
\langle p_{00111}p_{11111} - p_{10111}p_{01111},\ p_{01011}p_{11111} -
p_{11011}p_{01111},\ p_{10011}p_{11111} -p_{11011}p_{10111},\\
p_{00011}p_{11011} -p_{10011}p_{01011},\ p_{00011}p_{10111} -p_{10011}p_{00111},\ p_{00011}p_{11111} -p_{10011}p_{01111},\\ 
p_{00011}p_{01111} -p_{01011}p_{00111},\ p_{00011}p_{11111} -p_{01011}p_{10111},\ p_{00011}p_{11111} -p_{00111}p_{11011},\\
p_{01001}p_{10011} -p_{00011}p_{11001},\ p_{01001}p_{11011} - p_{11001}p_{01011},\ p_{01001}p_{10111} - p_{11001}p_{00111},\\
p_{01001}p_{11111} - p_{11001}p_{01111},\ p_{10001}p_{01011} - p_{00011}p_{11001},\ p_{10001}p_{11011} - p_{11001}p_{10011},\\ 
p_{10001}p_{01111} - p_{11001}p_{00111},\ p_{10001}p_{11111} - p_{11001}p_{10111},\ \textcolor{blue}{p_{00001}p_{11001} - p_{10001}p_{01001}},\\ 
p_{00001}p_{10011} - p_{10001}p_{00011},\ p_{00001}p_{10111} - p_{10001}p_{00111},\ p_{00001}p_{01011} - p_{01001}p_{00011},\\ 
	p_{00001}p_{01111} - p_{01001}p_{00111},\ p_{00001}p_{11011} - p_{00011}p_{11001},\ \textcolor{blue}{p_{00001}p_{11111} - p_{11001}p_{00111}}\rangle.
\end{align*}

Note that no cubics appear in this generating set. However, $p_{00001}p_{11111} - p_{11001}p_{00111}$ and $p_{00001}p_{11001} - p_{10001}p_{01001}$ do. This is a friendly quadratic pair.  Thus a cubic of particular form must appear in $I_\C$, namely a particular cubic with $w=v$. Thus, by Proposition~\ref{friendlyquadraticimpcubic}, $I_\C$ contains a particular cubic, and so by Corollary~\ref{1-inductively_pierced_violation}, $\C$ is not 1-inductively pierced.

We see from a realization of $\C$ in Figure~\ref{examplecode} that this code is 2-inductively pierced, even though it does not satisfy the conditions of Theorem~\ref{w=vimp2-ind}.  So Theorem~\ref{w=vimp2-ind} gives a sufficient condition for a code to be 2-inductively pierced, but it is not a necessary condition.

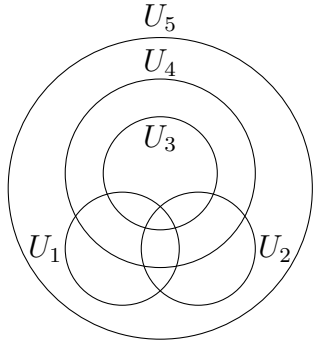
\begin{figure}[h!]
\begin{center}
\begin{tikzpicture}
\draw (-0.5,0) circle (0.75cm);
\draw (0.5,0) circle (0.75cm);
\draw (0,1) circle (0.75cm);
\draw (0,1) circle (1.25cm);
\draw (0,0.8) circle (2cm);

\node[label=left :  $U_1$] (1) at (-1,0) {};
\node[label=right :  $U_2$] (2) at (1,0) {};
\node[label=above :  $U_3$] (3) at (0,1) {};
\node[label=above: $U_4$] (4) at (0,2) {};
\node[label=above: $U_5$] (4) at (0,2.6) {};

\end{tikzpicture}
\end{center}
\caption{A realization of the code from Example~\ref{friendlyexample}. }\label{examplecode}
\end{figure}



\end{ex}

\section{Discussion}
We have shown sufficient conditions for both 1- and 2-inductively pierced codes. However, our sufficient condition for a code to be 1-inductively pierced relies on being able to analyze the entire toric ideal, a task that becomes unfeasible for codes on larger numbers of neurons. On the other hand, analyzing generating sets of the toric ideal is much easier. If we could classify all possible ways that a cubic of particular form can be generated, we could identify whether a cubic of particular form is in the toric ideal simply by looking at an arbitrary generating set. If this were accomplished, Theorem~\ref{w=vimp2-ind} and Proposition~\ref{1-inductively_pierced_sufficient_condition} would become much more powerful. 

Another direction for future study lies in identifying which receptive fields could potentially form a $2$-piercing. In this paper we assumed that we knew which three place fields were involved in a $2$-piercing just from looking at the code. One could simply check every possible combination of three neurons, but as the number of neurons increases, this becomes far more difficult. 

Further research in identifying which codes are well-formed and realizable in two dimensions would also be of great benefit. We have made the assumption that the codes we work with are well-formed and realizable in $\R^2$, but as of yet we are unaware of the existence of any tools to determine these conditions. 


\section*{Acknowledgements}
This research was conducted during the summer REU program at Texas A\&M University and funded by NSF DMS-1460766. We are grateful to Dr.~Anne Shiu for her valuable insight throughout the research process and her helpful comments on this manuscript.

\bibliographystyle{abbrv}
\bibliography{REU2017}

\end{document}